\documentclass[letterpaper,journal]{IEEEtran}
\usepackage{amsmath,amsfonts,amssymb}
\usepackage{algorithmic}
\usepackage{algorithm}
\usepackage{array}
\usepackage{booktabs}
\usepackage{textcomp} 
\usepackage{stfloats}
\usepackage{url}
\usepackage{graphicx}
\usepackage{cite}
\usepackage{placeins}
\graphicspath{{figures/}{Fig/}}
\newtheorem{proposition}{Proposition}

\begin{document}

\title{Dependency-Aware HARQ and Link Adaptation for Wireless Transmission of Open-Vocabulary Scene Graphs}

\author{Yuli~Liu, Jiacheng~Ruan, and Caiming~Sun,~\IEEEmembership{Senior Member,~IEEE}%
\thanks{This work was supported in part by the National Natural Science Foundation of China under Grant 62175120 and in part by the Shenzhen Science and Technology Innovation Program under Grant JCYJ20220818103011023.}%
\thanks{This work has been submitted to the IEEE for possible publication. Copyright may be transferred without notice, after which this version may no longer be accessible.}%
\thanks{Y. Liu is with the Department of Ocean Science and Technology, University of Macau, Taipa, Macau, China (e-mail: mc45259@um.edu.mo).}%
\thanks{J. Ruan and C. Sun are with the Shenzhen Institute of Artificial Intelligence and Robotics for Society (AIRS), School of Science and Engineering, The Chinese University of Hong Kong (CUHK), Shenzhen 518172, Guangdong, China (e-mail: ruanjiacheng@cuhk.edu.cn; cmsun@cuhk.edu.cn). C. Sun is the corresponding author.}%
}

\maketitle

\begin{abstract}
Wireless visual uplinks increasingly carry structured representations for edge inference, making packet reliability part of task-aware link adaptation. In open-vocabulary scene-graph transmission, an indexed triplet is usable only if both its triplet packet and the vocabulary packets defining any newly introduced tokens are recovered. This prerequisite coupling makes the marginal value of packet reliability depend on neighboring packet reliabilities. We formulate a dependency-aware semantic distortion and jointly optimize finite choices of modulation and coding scheme (MCS), transmit power, and Chase-combining hybrid automatic repeat request (HARQ) depth under expected delay and energy constraints. The distortion is multi-affine in packet failure probabilities and cannot, in general, be reduced to static separable unequal error protection (UEP) weights when prerequisites are active. This structure yields a state-dependent reliability coefficient and explicit switching thresholds among wireless actions. A Lagrangian block method performs exact per-packet finite-action updates for fixed multipliers. On reduced instances, it matches exhaustive optimization in 28 of 30 cases, with a worst gap of 1.095\%. On GQA traces using a table-driven block error rate (BLER) abstraction, it reduces mean semantic distortion by 58.71\% and grounded-query failure by 57.03\% relative to dependency-agnostic HARQ under the same budgets.
\end{abstract}

\begin{IEEEkeywords}
Semantic communications, HARQ, link adaptation, unequal error protection, wireless visual uplink, open-vocabulary scene graph, packet dependency.
\end{IEEEkeywords}

\section{Introduction}
\label{sec:introduction}

\IEEEPARstart{W}{ireless} visual uplinks enable cameras and mobile devices to deliver compact scene information to edge processors, where packet reliability must be managed under limited latency and energy. Conventional wireless reliability control uses rate adaptation, transmit-power control, hybrid automatic repeat request (HARQ), and unequal error protection (UEP) to trade transmission resources for lower packet error rates \cite{ref5,ref6,ref9,ref10,ref11,ref13,ref14}. Such designs are naturally packet oriented: the protection assigned to a packet is determined by its payload, target reliability, or a prescribed importance measure. This treatment becomes insufficient when receiver-side usefulness has explicit prerequisites, because successful recovery of one packet can determine whether separately transmitted packets are interpretable at all. In that case, the semantic loss associated with a packet error depends not only on the failed packet itself but also on the recovery states of related packets, which can change how limited delay and energy should be allocated across the wireless link.

Task-oriented communication provides a natural setting for this dependency because the receiver need not reconstruct every source bit as long as the recovered representation preserves the information required by the downstream task \cite{ref20,ref21,ref22}. Deep joint source--channel coding (JSCC), semantic codebooks, and related task-oriented transmission methods exploit this principle by adapting source representation or protection to reconstruction quality, feature importance, or task utility \cite{ref15,ref16,ref17,ref18,ref24,ref30}. Scene graphs further provide a compact structured representation of objects, relations, and attributes for retrieval and compositional reasoning \cite{ref37,ref38,ref39}, and recent wireless designs have used them for explainable or task-oriented semantic transmission \cite{ref23,ref25,ref26}. In an open-vocabulary scene graph, however, a newly introduced token must first be associated with an index known to the receiver. An indexed triplet containing that token is therefore interpretable only if both its triplet packet and the vocabulary-increment packet carrying the required token--index mapping are successfully recovered. A single vocabulary packet can consequently enable several downstream triplets, while a reliably delivered triplet packet can remain unusable when one of its prerequisite mappings is missing.

This prerequisite relation creates a reliability coupling that is not captured by conventional fixed-importance protection. UEP assigns different protection levels according to unequal source importance \cite{ref13,ref14}, while recent semantic retransmission schemes adapt HARQ according to semantic distortion, feature importance, codebook state, or task utility \cite{ref27,ref28,ref29}; related studies have also considered cooperative-perception and JSCC-based semantic HARQ \cite{ref45,ref46}. These approaches establish the relevance of semantic importance to wireless retransmission, but their protected units are generally valued without conditioning that value on the current reliabilities of multiple prerequisite and dependent packets. With vocabulary--triplet prerequisites, by contrast, the marginal benefit of improving one packet depends on the success probabilities of neighboring packets in the prerequisite graph. The resulting reliability value is therefore state dependent and cannot, in general, be represented by a reliability-independent static UEP weight. A wireless allocation for this setting must account jointly for prerequisite-induced semantic coupling and the delay--energy--reliability tradeoff created by MCS selection, transmit power, and HARQ retransmissions.

We study this problem in a single-user wireless visual uplink with finite per-packet link-adaptation actions. The transmitter separates newly introduced vocabulary mappings from indexed scene-graph triplets, and the wireless controller selects a modulation and coding scheme (MCS) label, transmit-power level, and maximum Chase-combining HARQ depth for each packet. We first model receiver-side triplet interpretability as a function of the residual failure probabilities of the indexed-triplet packet and all of its prerequisite vocabulary packets. This model yields a task-weighted semantic distortion with mixed reliability terms and leads to a finite joint MCS--power--HARQ allocation problem under expected frame-delay and frame-energy constraints. The distortion is multi-affine in the packet failure probabilities, so with all other packet reliabilities fixed, the dependence on any individual packet failure probability is exactly affine. We exploit this structure to derive a state-dependent marginal reliability coefficient and a finite-action switching condition, and then develop a Lagrangian block method whose packet-wise updates are exact for fixed multipliers. The resulting design is evaluated with analytical fading models, GQA-derived packet traces, a non-analytic packet-level block error rate (BLER) ledger, controlled action-switching experiments, and exhaustive optimization on tractable instances.

The main contributions are summarized as follows:
\begin{itemize}
\item We model open-vocabulary scene-graph transmission with explicit vocabulary--triplet prerequisites and derive a task-weighted receiver-side interpretability distortion that separates indexed-packet loss from vocabulary-induced semantic loss.

\item We characterize the reliability coupling created by active prerequisites. The resulting distortion contains mixed reliability terms and is not, in general, equivalent to a separable UEP objective with reliability-independent packet weights. The corresponding marginal reliability coefficient depends on the current reliabilities of dependent triplets and other prerequisites, yielding an explicit switching condition among finite MCS--power--HARQ actions.

\item We formulate the joint per-packet MCS, transmit-power, and HARQ-depth allocation under expected delay and energy constraints and develop a finite-action Lagrangian block method. For fixed multipliers, each packet update is solved exactly over its admissible action set, and repeated fixed-multiplier block updates terminate after finitely many strict configuration changes.

\item We validate the proposed allocation through analytical wireless sweeps, GQA trace-driven workloads, resource-budget and wireless-uncertainty tests, a controlled action-switching experiment, a table-driven BLER link abstraction, and a strong static dependency-aware (Static-DA) UEP comparator. On reduced instances, exhaustive enumeration agrees with the proposed method in 28 of 30 cases, with a worst observed relative gap of 1.095\%; under the primary GQA protocol, the proposed allocation reduces mean semantic distortion and grounded-query failure by 58.71\% and 57.03\%, respectively, relative to dependency-agnostic HARQ.
\end{itemize}

The remainder of the paper is organized as follows. Section~\ref{sec:related} reviews scene-graph semantic transmission, HARQ, and UEP. Section~\ref{sec:system} defines the open-vocabulary packetization and wireless link model. Section~\ref{sec:allocation} develops the dependency-aware distortion, discrete allocation problem, and packet-wise link-adaptation method. Section~\ref{sec:results} presents solver validation, analytical resource tradeoffs, the Static-DA audit, GQA trace evidence, controlled action switching, and link-abstraction cross-checks. Section~\ref{sec:conclusion} concludes the paper.

\begin{table}[!t]
\caption{Key System and Optimization Quantities}
\label{tab:key_quantities}
\centering
\footnotesize
\setlength{\tabcolsep}{3.0pt}
\begin{tabular}{p{0.24\linewidth}p{0.66\linewidth}}
\toprule
Quantity & Role in the paper\\
\midrule
$\mathcal{V}_{f-1},\Delta\mathcal{V}_f$ & Shared vocabulary state and newly introduced token mappings.\\
$\mathcal{Q}_i,\ell(i)$ & Vocabulary prerequisites and indexed-triplet packet associated with triplet $i$.\\
$B_u$ & Packet payload that determines channel uses for a selected rate label.\\
$m_u,P_u,K_u$ & Rate/MCS label, transmit power, and maximum HARQ depth selected for packet $u$.\\
$\epsilon_u,\bar{\tau}_u,\bar{E}_u$ & Residual packet failure probability, expected packet delay, and expected packet energy.\\
$w_i,\bar D_f$ & Triplet task weight and dependency-aware expected semantic distortion.\\
$\alpha_u$ & State-dependent marginal reliability value used by the packet-wise wireless update.\\
$\tau_{\max},E_{\max},\delta_v$ & Frame-delay budget, frame-energy budget, and optional vocabulary reliability floor.\\
\bottomrule
\end{tabular}
\end{table}

\section{Related Work}
\label{sec:related}

\subsection{Semantic Visual and Scene-Graph Communication}

Semantic and task-oriented communication replaces exact bit reconstruction by utility measures tied to the receiver task \cite{ref20,ref21,ref22}. Deep joint source--channel coding (DeepJSCC) and feedback-aided image transmission address end-to-end image delivery over noisy channels \cite{ref15,ref16,ref17}. Semantic codebooks, semantic multiple-input multiple-output (MIMO) systems, and generative semantic transmission extend this idea to task-oriented representations and adaptive transmission \cite{ref24,ref30,ref31}.

Scene graphs encode visual content as entities and relations and have been used for retrieval and compositional reasoning \cite{ref37,ref38,ref39}. Open-vocabulary scene-graph generation (OV-SGG) further studies recognition or generation of novel entity and relation concepts at the visual extraction stage \cite{ref49,ref50}. Wireless scene-graph methods use structured graphs for explainable or task-oriented semantic transmission \cite{ref23,ref25,ref26}, while probabilistic graph compression exploits semantic regularity to omit predictable graph elements \cite{ref47}. The present work is complementary to OV-SGG: semantic extraction is treated as an upstream input, and ``open-vocabulary'' refers to the communication interface in which previously unsynchronized token strings may arrive at run time and must be delivered before their indices are interpretable. The wireless contribution is therefore prerequisite-aware reliability allocation rather than open-vocabulary visual recognition.

\subsection{HARQ, UEP, and Semantic Retransmission}

HARQ and retransmission-aware resource allocation are established mechanisms for trading reliability against delay and energy \cite{ref7,ref8,ref10}. Finite-blocklength and variable-rate studies further characterize the interaction among coding rate, retransmission, and latency \cite{ref5,ref9,ref12}, while fast-HARQ optimization explicitly accounts for feedback delay and energy \cite{ref11}. UEP assigns different protection levels to source components according to their importance \cite{ref13,ref14}.

Semantic retransmission has recently moved this principle toward task-aware HARQ. Fine-grained semantic retransmission, feature-distortion-driven SemHARQ, and semantic-codebook HARQ adapt protection according to semantic information or task distortion \cite{ref27,ref28,ref29}. Cooperative-perception HARQ and JSCC-powered semantic HARQ further couple retransmission with task-oriented representations \cite{ref45,ref46}. The distinction considered here is that the value of a vocabulary packet is not fixed by its own content; it depends on the current reliabilities of multiple separately transmitted triplets and prerequisites, which creates a state-dependent packet value inside the wireless allocation problem.

\section{System Model}
\label{sec:system}

Fig.~\ref{fig:framework} shows the considered single-user wireless visual uplink, in which the transmitter forms an open-vocabulary scene graph, separates newly introduced token mappings from indexed triplets, and sends the resulting packets sequentially with acknowledgment/negative acknowledgment (ACK/NACK) feedback. Scene-graph representations and their use for semantic visual transmission motivate the structured triplet interface considered here \cite{ref25,ref37,ref38}.

\begin{figure*}[!t]
\centering
\includegraphics[width=\textwidth]{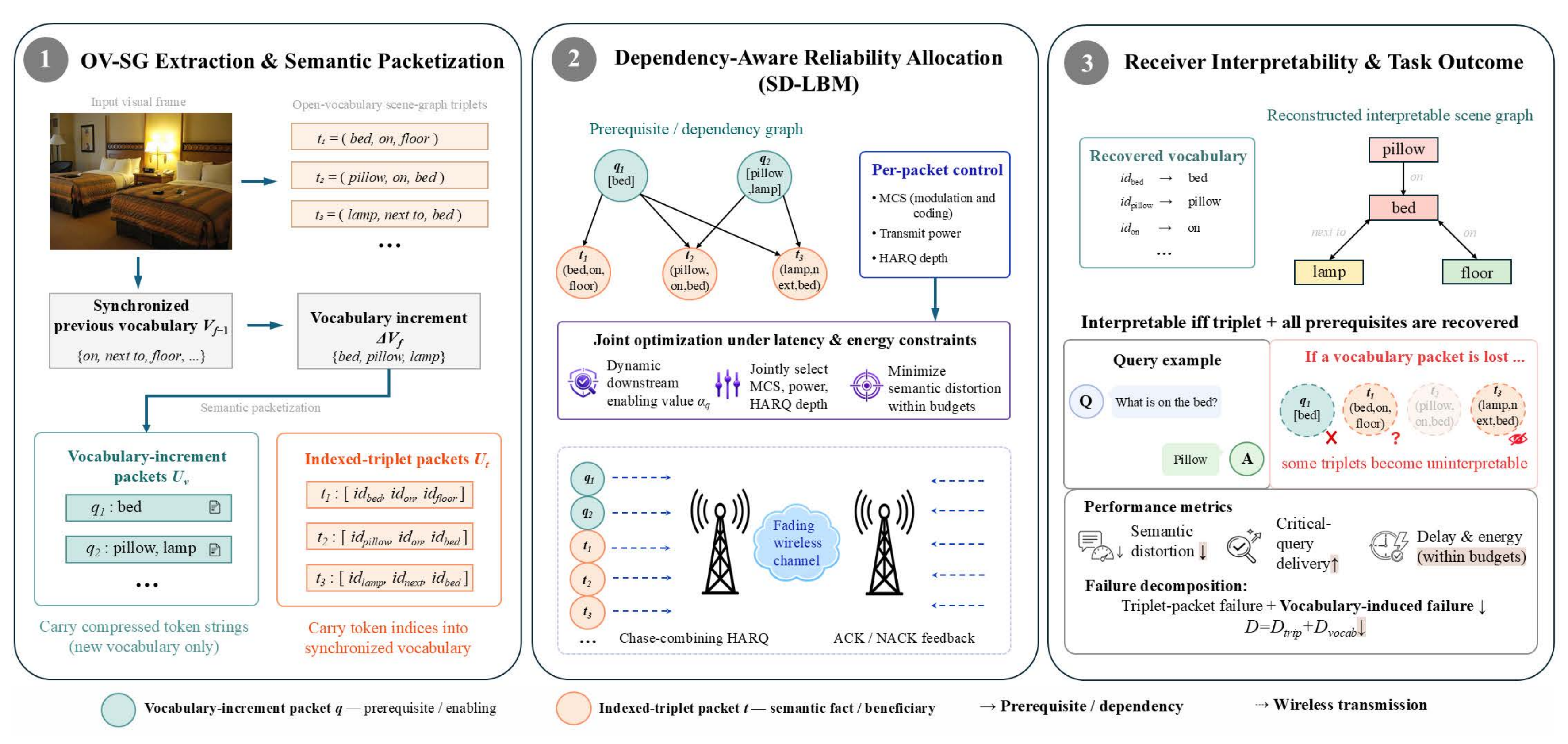}
\caption{Dependency-aware open-vocabulary scene-graph wireless semantic uplink. Vocabulary-increment packets synchronize newly used token mappings, indexed-triplet packets carry facts by index, and the wireless controller selects an MCS label, transmit power, and HARQ depth under expected delay and energy budgets. A triplet is useful only when its indexed packet and all prerequisite mappings are recovered.}
\label{fig:framework}
\end{figure*}

The wireless design begins after semantic extraction, so the realized triplets, token strings, prerequisite sets, and task weights are inputs to the packetizer and are not optimized by the wireless controller. All packet actions for a frame are selected before transmission of the first packet. Packets are then sent sequentially; ACK/NACK is used for within-packet HARQ termination and for committing successfully delivered vocabulary mappings to the state used by subsequent frames, but it does not trigger same-frame cross-packet action re-optimization. This frame-wise open-loop allocation isolates the prerequisite-induced reliability valuation studied here from causal scheduling or sequential decision control. The resulting expected frame delay and energy are additive across packets. Conditional on the frame-level large-scale state, the analytical branch uses independent Rayleigh fading blocks across packets and HARQ rounds, following the standard block-fading abstraction used in wireless reliability analysis \cite{ref2,ref4}.

\subsection{Semantic Packetization and Vocabulary State}
\label{subsec:packetization}

Frame $f$ contains
\begin{equation}
\mathcal{T}_f=\left\{t_i=(s_i,r_i,o_i,c_i)\right\}_{i=1}^{N_f},
\label{eq:triplet_set}
\end{equation}
where $s_i$, $r_i$, and $o_i$ are the subject, relation, and object tokens, $c_i\in(0,1]$ is the extractor confidence, and $N_f$ is the number of extracted triplets. The triplet representation follows the subject--relation--object structure used in scene-graph reasoning and scene-graph semantic communication \cite{ref37,ref39,ref25}.

Before frame $f$, the transmitter and receiver share vocabulary $\mathcal{V}_{f-1}$, while the current token set and vocabulary increment are
\begin{align}
\mathcal{X}_f&=\bigcup_{i=1}^{N_f}\{s_i,r_i,o_i\},\label{eq:frame_tokens}\\
\Delta\mathcal{V}_f&=\mathcal{X}_f\setminus\mathcal{V}_{f-1}.\label{eq:vocab_increment}
\end{align}
After synchronization, each token is represented by an index of length
\begin{equation}
b_{\rm id}(f)=\left\lceil\log_2\left|\mathcal{V}_{f-1}\cup\Delta\mathcal{V}_f\right|\right\rceil.
\label{eq:id_length}
\end{equation}

The triplets are sorted by fixed task weight and partitioned into $L$ indexed-triplet packets. Let $\ell(i)\in\{1,\ldots,L\}$ denote the packet containing $t_i$, let $q(x)$ denote the vocabulary packet carrying a new token $x\in\Delta\mathcal{V}_f$, and define the prerequisite set
\begin{equation}
\mathcal{Q}_i=\left\{q(x):x\in\{s_i,r_i,o_i\}\cap\Delta\mathcal{V}_f\right\}.
\label{eq:prerequisite_set}
\end{equation}
The case $\mathcal{Q}_i=\emptyset$ means that all tokens of $t_i$ are already synchronized.

Let $\mathcal{U}_0$ denote the vocabulary-increment packet set, let $\mathcal{U}_t=\{1,\ldots,L\}$ denote the indexed-triplet packet set, and let $\mathcal{U}=\mathcal{U}_0\cup\mathcal{U}_t$. If $\Delta\mathcal{V}_f^{(q)}$ is the token subset in vocabulary packet $q$ and $A_{q,f}=1$ indicates successful recovery within the allowed HARQ depth, the acknowledged vocabulary state evolves as
\begin{equation}
\mathcal{V}_f=\mathcal{V}_{f-1}\cup\bigcup_{q\in\mathcal{U}_0:A_{q,f}=1}\Delta\mathcal{V}_f^{(q)}.
\label{eq:vocab_state_update}
\end{equation}
An unrecovered token remains unsynchronized and is announced again when referenced by a later frame, while the present optimizer is frame-wise conditional on $\mathcal{V}_{f-1}$ and does not assign a separate long-horizon value to future token reuse.

For $\mathcal{T}_f^{(\ell)}=\{t_i:\ell(i)=\ell\}$, the vocabulary and indexed-triplet payloads are
\begin{subequations}
\label{eq:packet_payloads}
\begin{align}
B_{0q}&=B_{\rm hdr}^{v}+\sum_{x\in\Delta\mathcal{V}_f^{(q)}}\left(b_{\rm tok}(x)+b_{\rm crc}^{\rm tok}\right),\label{eq:vocab_payload}\\
B_{\ell}&=B_{\rm hdr}^{t}+\sum_{i:\ell(i)=\ell}\left(3b_{\rm id}(f)+b_{\rm aux}\right).\label{eq:triplet_payload}
\end{align}
\end{subequations}
Here $B_{\rm hdr}^{v}$ and $B_{\rm hdr}^{t}$ contain packet metadata and packet-level cyclic redundancy check (CRC) fields, $b_{\rm tok}(x)$ is the compressed token-string length, $b_{\rm crc}^{\rm tok}$ is an optional token-verification field, and $b_{\rm aux}$ contains confidence or attribute information. Each triplet has a positive task weight $w_i$ fixed before wireless optimization and independent of realized packet-decoding outcomes, consistent with task-oriented communication formulations in which task importance is determined before channel realization \cite{ref20,ref23}.

\subsection{Wireless Link and HARQ Reliability}
\label{subsec:wireless_harq}

Each packet $u\in\mathcal{U}$ is assigned a finite wireless action
\begin{equation}
a_u=(m_u,P_u,K_u)\in\mathcal{A}_u,
\label{eq:packet_configuration}
\end{equation}
where $m_u\in\mathcal{M}$ is a rate/MCS label, $P_u\in\mathcal{P}$ is a transmit-power level, and $K_u\in\{1,\ldots,K_{\max}\}$ is the maximum HARQ depth. Rate adaptation and HARQ-depth selection are standard reliability-control dimensions in retransmission-aware wireless links \cite{ref9,ref11,ref12}. In the analytical branch, $m_u$ specifies the target spectral efficiency used by the outage model rather than a standard-specific code family.

For nominal spectral efficiency $\rho(m_u)$, the channel uses per round and actual coding rate are
\begin{equation}
n_u=\left\lceil\frac{B_u}{\rho(m_u)}\right\rceil,\qquad R_u=\frac{B_u}{n_u}\leq\rho(m_u).
\label{eq:blocklength_rate}
\end{equation}

The received signal in HARQ round $k$ is
\begin{equation}
y_{u,k}=\sqrt{P_u g(d)}h_{u,k}x_{u,k}+z_{u,k},
\label{eq:received_signal}
\end{equation}
where $x_{u,k}$ has unit average power, $h_{u,k}\sim\mathcal{CN}(0,1)$ is a circularly symmetric complex Gaussian fading coefficient, $g(d)$ is the large-scale gain, and $z_{u,k}\sim\mathcal{CN}(0,N_0W)$ is complex additive white Gaussian noise (AWGN) over bandwidth $W$ \cite{ref2,ref3}.

The distance-dependent path loss and receiver noise power in dB units are
\begin{align}
L(d)&=32.4+20\log_{10}(f_c)+10\alpha\log_{10}(d),\label{eq:path_loss}\\
N^{\rm dBm}&=-174+10\log_{10}(W)+F,\label{eq:noise_power}
\end{align}
where $f_c$ is in GHz, $d$ is in meters, $\alpha$ is the path-loss exponent, and $F$ is the receiver noise figure. The log-distance path-loss abstraction and thermal-noise model follow standard wireless link-budget practice \cite{ref2,ref3}. With log-normal shadowing $\xi\sim\mathcal{N}(0,\sigma_{\rm sh}^2)$, implementation loss $L_{\rm imp}$, and a conservative channel state information (CSI) backoff margin $M_{\rm CSI}\geq0$, the average signal-to-noise ratio (SNR) is
\begin{equation}
\bar{\gamma}_u=10^{\frac{P_u^{\rm dBm}-L(d)-\xi-L_{\rm imp}-M_{\rm CSI}-N^{\rm dBm}}{10}}.
\label{eq:average_snr}
\end{equation}
The CSI margin is zero in the default setting, and the frame-level large-scale state is denoted by $Z_f=(d,\xi,L_{\rm imp},M_{\rm CSI})$.

For Chase-combining HARQ, $\Gamma_u^{(k)}=\sum_{j=1}^{k}\gamma_{u,j}$ is the accumulated SNR after $k$ rounds, where the per-round instantaneous SNR $\gamma_{u,j}$ is exponentially distributed with mean $\bar{\gamma}_u$ under independent Rayleigh fading. Chase combining accumulates repeated observations of the same packet before decoding \cite{ref7,ref8}. Hence, $\Gamma_u^{(k)}$ follows a gamma distribution with shape parameter $k$ and scale parameter $\bar{\gamma}_u$. With decoding threshold $\theta_u=2^{R_u}-1$, the failure probability after $k$ rounds is the outage probability $\Pr\{\Gamma_u^{(k)}<\theta_u\}$, given by the cumulative distribution function of $\Gamma_u^{(k)}$ evaluated at $\theta_u$ \cite{ref4,ref7}:
\begin{equation}
\epsilon_u^{(k)}
=
1-\exp\!\left(-\frac{\theta_u}{\bar{\gamma}_u}\right)
\sum_{r=0}^{k-1}
\frac{1}{r!}
\left(\frac{\theta_u}{\bar{\gamma}_u}\right)^r,
\label{eq:harq_failure_k}
\end{equation}
and the residual failure probability associated with action $a_u$ is
\begin{equation}
\epsilon_u(a_u)=\epsilon_u^{(K_u)}.
\label{eq:residual_failure}
\end{equation}
The allocator requires only packet-level residual failure probability, expected HARQ rounds, delay, and energy. Hence, the analytical outage model acts as an interchangeable link interface: it can be replaced by BLER values obtained from a link-level or link-to-system abstraction without changing the allocation equations \cite{ref51,ref52}.

\subsection{Delay and Energy Accounting}
\label{subsec:delay_energy}

If $T_u$ is the actual number of HARQ rounds used by packet $u$, round $k$ is transmitted only when the first $k-1$ attempts fail, which gives the standard truncated-retransmission expectation \cite{ref8,ref10}
\begin{equation}
\bar{T}_u=\mathbb{E}[T_u]=\sum_{k=1}^{K_u}\epsilon_u^{(k-1)},\qquad \epsilon_u^{(0)}=1.
\label{eq:expected_rounds}
\end{equation}
The feedback-aware expected packet delay is
\begin{equation}
\bar{\tau}_u(a_u)=\bar{T}_u\left(\frac{n_u}{W}+\tau_{\rm fb}\right),
\label{eq:expected_delay}
\end{equation}
where $\tau_{\rm fb}$ is the ACK/NACK processing and feedback time per transmitted attempt. Feedback-aware retransmission delay and energy accounting are standard in latency-constrained HARQ design \cite{ref10,ref11}. The expected energy is
\begin{align}
\bar{E}_u(a_u)={}&P_u^{\rm W}\bar{T}_u\frac{n_u}{W}+P_c\bar{T}_u\left(\frac{n_u}{W}+\tau_{\rm fb}\right)\nonumber\\
&+P_{\rm fb}\tau_{\rm fb}\bar{T}_u,
\label{eq:expected_energy}
\end{align}
where $P_u^{\rm W}$ is transmit power in watts, $P_c$ is the circuit-equivalent power, and $P_{\rm fb}$ is the feedback-processing equivalent power.

\section{Dependency-Aware Reliability Allocation}
\label{sec:allocation}

\subsection{Semantic Metric and Allocation Problem}
\label{subsec:semantic_problem}

Triplet $t_i$ is interpretable only when its indexed-triplet packet and every prerequisite vocabulary packet are recovered. This receiver-side utility definition follows the task-oriented principle that successful communication is determined by usable semantic content rather than packet recovery in isolation \cite{ref20,ref23,ref25}. Under conditional packet independence given $Z_f$, its interpretability probability is
\begin{equation}
p_i^{\rm int}(Z_f)=\left(1-\epsilon_{\ell(i)}(Z_f)\right)\prod_{q\in\mathcal{Q}_i}\left(1-\epsilon_q(Z_f)\right),
\label{eq:interpretability_probability}
\end{equation}
where an empty product equals one. The expected dependency-aware semantic distortion and its normalized form are
\begin{align}
\bar{D}_f&=\sum_{i=1}^{N_f}w_i\left[1-\left(1-\epsilon_{\ell(i)}\right)\prod_{q\in\mathcal{Q}_i}\left(1-\epsilon_q\right)\right],\label{eq:semantic_distortion}\\
\bar{D}_f^{\rm norm}&=\frac{\bar{D}_f}{\sum_{i=1}^{N_f}w_i}.\label{eq:normalized_distortion}
\end{align}

For a multi-fact query $h$ requiring triplets $\mathcal{C}_{f,h}$, define the distinct required-packet set $\mathcal{R}_{f,h}=\{\ell(i):i\in\mathcal{C}_{f,h}\}\cup\bigcup_{i\in\mathcal{C}_{f,h}}\mathcal{Q}_i$. Its communication-layer delivery probability is
\begin{equation}
p_h^{\rm qry}=\prod_{u\in\mathcal{R}_{f,h}}(1-\epsilon_u).
\label{eq:query_delivery}
\end{equation}
The optimization below uses $\bar D_f$, while the experiments also report exact multi-fact query delivery to expose cases in which the triplet-weighted objective and joint query delivery prefer different allocations.

The distortion separates into indexed-triplet and vocabulary-induced components:
\begin{subequations}
\label{eq:failure_decomposition}
\begin{align}
\bar{D}_f^{\rm trip}&=\sum_{i=1}^{N_f}w_i\epsilon_{\ell(i)},\label{eq:triplet_failure_component}\\
\bar{D}_f^{\rm vocab}&=\sum_{i=1}^{N_f}w_i(1-\epsilon_{\ell(i)})\left[1-\prod_{q\in\mathcal{Q}_i}(1-\epsilon_q)\right].\label{eq:vocab_failure_component}
\end{align}
\end{subequations}
Thus, $\bar D_f=\bar D_f^{\rm trip}+\bar D_f^{\rm vocab}$, and the vocabulary-induced term vanishes when every $\mathcal{Q}_i$ is empty.

The product model in \eqref{eq:interpretability_probability} can be separated from the wireless optimizer. If the true joint recovery probability is $p_i^{\rm joint}=p_i^{\rm int}+\Delta_i$, then
\begin{equation}
\left|\bar{D}_f^{\rm joint}-\bar{D}_f\right|\leq\sum_{i=1}^{N_f}w_i|\Delta_i|,
\label{eq:mismatch_certificate}
\end{equation}
which makes the effect of packet-dependence mismatch explicit without changing the packet-level link model.

The wireless controller selects one action for every packet by solving
\begin{subequations}
\label{prob:p0}
\begin{align}
\mathbf{P0}:\quad \min_{\{a_u\}_{u\in\mathcal{U}}}\quad &\bar{D}_f\!\left(\{a_u\}_{u\in\mathcal{U}}\right)\label{prob:p0_objective}\\
\text{s.t.}\quad &\sum_{u\in\mathcal{U}}\bar{\tau}_u(a_u)\leq\tau_{\max},\label{prob:p0_delay}\\
&\sum_{u\in\mathcal{U}}\bar{E}_u(a_u)\leq E_{\max},\label{prob:p0_energy}\\
&\epsilon_q(a_q)\leq\delta_v,\qquad q\in\mathcal{U}_0,\label{prob:p0_vocab}\\
&a_u\in\mathcal{A}_u,\qquad u\in\mathcal{U}.\label{prob:p0_candidate}
\end{align}
\end{subequations}
The first two constraints impose expected frame-delay and frame-energy budgets, while $\delta_v$ is an optional common reliability floor for vocabulary packets and is inactive in the default experiments with $\delta_v=1$. Problem $\mathbf{P0}$ is finite and generally nonconvex because its variables are discrete and its objective contains prerequisite-induced products of packet success probabilities.

\subsection{Dependency Structure and Link-Adaptation Rule}
\label{subsec:dependency_link}

The distortion is affine in the residual failure probability of any one packet when all other packet reliabilities are fixed. For indexed-triplet packet $\ell\in\mathcal{U}_t$ and vocabulary packet $q\in\mathcal{U}_0$, define
\begin{subequations}
\label{eq:dependency_coefficients}
\begin{align}
\alpha_{\ell}&=\sum_{i:\ell(i)=\ell}w_i\prod_{q\in\mathcal{Q}_i}(1-\epsilon_q),\label{eq:alpha_triplet}\\
\alpha_q&=\sum_{i:q\in\mathcal{Q}_i}w_i(1-\epsilon_{\ell(i)})\prod_{q'\in\mathcal{Q}_i\setminus\{q\}}(1-\epsilon_{q'}).\label{eq:alpha_vocab}
\end{align}
\end{subequations}
The coefficient $\alpha_{\ell}$ is the current weighted value of delivering indexed packet $\ell$, whereas $\alpha_q$ is the current enabling value of vocabulary packet $q$ and varies with the reliabilities of its downstream triplets and other prerequisites. For any packet $u$, fixing the other packet configurations gives
\begin{equation}
\bar{D}_f=C_u+\alpha_u\epsilon_u(a_u),
\label{eq:block_affine_form}
\end{equation}
where $C_u$ is independent of $a_u$.

Static UEP models allocate protection using reliability-independent source importance weights \cite{ref13,ref14}. The following proposition identifies why an active prerequisite graph cannot generally be reduced to that form.

\begin{proposition}[Non-equivalence to static separable UEP]
\label{prop:nonseparable}
Assume that at least one positive-weight triplet $i$ has a prerequisite $q\in\mathcal{Q}_i$. On any open reliability domain where the success probabilities of the remaining prerequisites are positive, there are no reliability-independent constants $C$ and $\{\omega_u\}$ such that
\begin{equation}
\bar{D}_f(\boldsymbol{\epsilon})=C+\sum_{u\in\mathcal{U}}\omega_u\epsilon_u
\label{eq:static_separable_form}
\end{equation}
throughout that domain.
\end{proposition}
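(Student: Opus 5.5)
The plan is to argue by contradiction, using a mixed second-order partial derivative that must vanish for any affine function but does not vanish for $\bar D_f$. Regard $\bar D_f$ in \eqref{eq:semantic_distortion} as a polynomial in the vector $\boldsymbol{\epsilon}=(\epsilon_u)_{u\in\mathcal{U}}$, treated as free coordinates on the given open domain $\Omega$. By the block-affine form \eqref{eq:block_affine_form}, $\bar D_f$ is multi-affine, so all second derivatives $\partial^2\bar D_f/\partial\epsilon_u^2$ vanish. The obstruction must therefore come from a mixed partial. If \eqref{eq:static_separable_form} held on $\Omega$, then every mixed partial $\partial^2\bar D_f/\partial\epsilon_u\partial\epsilon_{u'}$ with $u\neq u'$ would be identically zero on $\Omega$. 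It therefore suffices to exhibit one pair for which this fails.

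Take the triplet $i$ and prerequisite $q\in\mathcal{Q}_i$ from the hypothesis, and set $\ell=\ell(i)$. Since $\ell\in\mathcal{U}_t$ and $q\in\mathcal{U}_0$, these are distinct packets. Differentiating \eqref{eq:semantic_distortion} first in $\epsilon_q$ gives $\partial\bar D_f/\partial\epsilon_q=\alpha_q$ from \eqref{eq:alpha_vocab}. Differentiating $\alpha_q$ in $\epsilon_\ell$ then picks out exactly the triplets $j$ with $\ell(j)=\ell$ and $q\in\mathcal{Q}_j$:
\begin{equation*}
\frac{\partial^2\bar D_f}{\partial\epsilon_\ell\,\partial\epsilon_q}
=-\sum_{j:\,\ell(j)=\ell,\;q\in\mathcal{Q}_j} w_j\prod_{q'\in\mathcal{Q}_j\setminus\{q\}}(1-\epsilon_{q'}).
\end{equation*}
Equivalently, $\alpha_q$ depends on $\epsilon_\ell$, so no reliability-independent weight $\omega_q$ can reproduce it.

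The main obstacle is showing that this sum is strictly negative on $\Omega$ rather than cancelling. No cancellation is possible because every term has the same sign. Each weight $w_j$ is positive, and the set of contributing triplets contains $i$ itself. Each product $\prod_{q'\in\mathcal{Q}_j\setminus\{q\}}(1-\epsilon_{q'})$ is a product of success probabilities of remaining prerequisites, which the hypothesis makes positive on $\Omega$. Hence every term is nonnegative and the $i$-th term is strictly positive, so the mixed partial is strictly negative throughout $\Omega$. This contradicts the vanishing forced by \eqref{eq:static_separable_form}.

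A derivative-free version of the same argument is also available. Fix all other coordinates at a point of $\Omega$, and perturb $\epsilon_\ell$ and $\epsilon_q$ by small amounts $h$ and $k$ that keep the point in the open set. The second difference $\bar D_f(+h,+k)-\bar D_f(+h,0)-\bar D_f(0,+k)+\bar D_f(0,0)$ is zero for any form \eqref{eq:static_separable_form}. For $\bar D_f$, multi-affinity makes this difference equal to $hk$ times the mixed coefficient above, which is nonzero. Either version concludes the proof.
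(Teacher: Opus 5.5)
Your proof is correct and is essentially the paper's own argument: compute the mixed partial $\partial^2\bar D_f/\partial\epsilon_\ell\,\partial\epsilon_q$ for $\ell=\ell(i)$ and $q\in\mathcal{Q}_i$, show it is strictly negative, and contradict the identically zero mixed partials of the separable form. Your sign argument (every term is nonnegative and the $i$-th term is strictly positive) supplies the justification for strict negativity that the paper's proof only asserts, and the second-difference version is a valid, equivalent alternative.
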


\begin{IEEEproof}
For any indexed packet $\ell$ and vocabulary packet $q$ sharing at least one dependent triplet,
\begin{equation}
\frac{\partial^2\bar{D}_f}{\partial\epsilon_{\ell}\partial\epsilon_q}=-\!\sum_{i:\ell(i)=\ell,\,q\in\mathcal{Q}_i}w_i\!\prod_{q'\in\mathcal{Q}_i\setminus\{q\}}(1-\epsilon_{q'})<0.
\label{eq:mixed_partial}
\end{equation}
Every mixed derivative of \eqref{eq:static_separable_form} is zero, which gives a contradiction.
\end{IEEEproof}

For multipliers $\lambda,\mu\geq0$, define
\begin{align}
\Phi(\mathbf{a};\lambda,\mu)={}&\bar{D}_f(\mathbf{a})+\lambda\left[\sum_{u\in\mathcal{U}}\bar{\tau}_u(a_u)-\tau_{\max}\right]\nonumber\\
&+\mu\left[\sum_{u\in\mathcal{U}}\bar{E}_u(a_u)-E_{\max}\right].
\label{eq:lagrangian_objective}
\end{align}
After removing vocabulary actions that violate \eqref{prob:p0_vocab}, fixing all other packet configurations and the multipliers yields the exact packet update
\begin{equation}
a_u^{\star}=\arg\min_{a\in\mathcal{A}_u^{\rm adm}}\left\{\alpha_u\epsilon_u(a)+\lambda\bar{\tau}_u(a)+\mu\bar{E}_u(a)\right\}.
\label{eq:block_update}
\end{equation}

The wireless meaning of $\alpha_u$ follows directly from \eqref{eq:block_update}. For two actions $a$ and $b$ with $\epsilon_u(a)<\epsilon_u(b)$, define $c_u(a)=\lambda\bar{\tau}_u(a)+\mu\bar{E}_u(a)$. The more reliable action $a$ is preferred whenever
\begin{equation}
\alpha_u\geq\eta_u(a,b)\triangleq\frac{c_u(a)-c_u(b)}{\epsilon_u(b)-\epsilon_u(a)}.
\label{eq:switching_threshold}
\end{equation}
If $c_u(a)\leq c_u(b)$, action $a$ weakly dominates $b$ for every $\alpha_u\geq0$; otherwise, \eqref{eq:switching_threshold} is the reliability-value threshold at which its additional priced delay and energy become worthwhile. Because $\alpha_q$ in \eqref{eq:alpha_vocab} changes with prerequisite state, the selected rate/MCS, transmit power, or HARQ depth can change even when the payload, channel state, candidate menu, and resource prices are fixed.

\subsection{Algorithm and Properties}
\label{subsec:algorithm_properties}

\begin{algorithm}[!t]
\caption{Dependency-Aware Lagrangian Link Adaptation}
\label{alg:block}
\begin{algorithmic}[1]
\REQUIRE Triplets $\mathcal{T}_f$; dependency sets $\{\mathcal{Q}_i\}$; candidate sets $\{\mathcal{A}_u\}$; budgets $\tau_{\max}$ and $E_{\max}$; optional floor $\delta_v$; maximum sweeps $R_{\max}$.
\STATE Screen the candidate sets to form $\mathcal{A}_u^{\rm adm}$ and return ``infeasible'' if any admissible set is empty.
\STATE Construct deterministic admissible seeds and initialize the feasible and least-violation records.
\STATE Initialize one multiplier start $(\lambda,\mu)$.
\FOR{$r=1,2,\ldots,R_{\max}$}
\FOR{each packet $u\in\mathcal{U}$}
\STATE Compute $\alpha_u$ from \eqref{eq:alpha_triplet} or \eqref{eq:alpha_vocab}.
\STATE Update $a_u$ by \eqref{eq:block_update}.
\ENDFOR
\STATE Compute $\tau_{\rm tot}=\sum_u\bar{\tau}_u(a_u)$ and $E_{\rm tot}=\sum_u\bar{E}_u(a_u)$.
\STATE Update the least-violation diagnostic record.
\STATE $\lambda\leftarrow[\lambda+s_r(\tau_{\rm tot}-\tau_{\max})]^+$.
\STATE $\mu\leftarrow[\mu+s_r(E_{\rm tot}-E_{\max})]^+$.
\IF{$\tau_{\rm tot}\leq\tau_{\max}$ and $E_{\rm tot}\leq E_{\max}$}
\STATE Retain the current configuration if it improves the feasible incumbent.
\ENDIF
\ENDFOR
\IF{a feasible configuration was found}
\RETURN the best feasible configuration.
\ELSE
\RETURN the least-violation configuration with status ``no feasible configuration found''.
\ENDIF
\end{algorithmic}
\end{algorithm}

Here $[x]^+=\max\{x,0\}$ and the implementation uses the diminishing step size $s_r=s_0/\sqrt{r}$. A current-point-preserving tie rule is used in \eqref{eq:block_update}, so a tied candidate does not trigger a configuration change. The numerical implementation repeats the sweep for the multiplier starts listed in Table~\ref{tab:params} and retains the best feasible configuration encountered.

\begin{proposition}[Finite fixed-multiplier termination]
\label{prop:fixed_termination}
For fixed $\lambda$ and $\mu$, finite admissible candidate sets, cyclic packet updates, and the current-point-preserving tie rule, repeated packet updates reach after finitely many strict configuration changes a point that minimizes $\Phi(\mathbf{a};\lambda,\mu)$ with respect to every individual packet block.
\end{proposition}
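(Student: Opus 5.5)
The argument is a standard potential (descent) argument, with $\Phi(\cdot;\lambda,\mu)$ itself serving as the potential. First I will record that, with $\lambda,\mu$ fixed, the configuration space $\prod_{u\in\mathcal{U}}\mathcal{A}_u^{\rm adm}$ is finite. It is a finite product of finite sets, and $\Phi$ takes only finitely many real values on it.

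Next I will show that each packet update weakly decreases $\Phi$. By the block-affine form \eqref{eq:block_affine_form}, when all configurations other than $a_u$ are fixed,
\begin{equation}
\Phi(\mathbf{a};\lambda,\mu)=C_u'+\alpha_u\epsilon_u(a_u)+\lambda\bar{\tau}_u(a_u)+\mu\bar{E}_u(a_u),
\end{equation}
where $C_u'$ collects $C_u$, the delay and energy terms of the other packets, and the constants $-\lambda\tau_{\max}-\mu E_{\max}$. None of these depend on $a_u$. The update \eqref{eq:block_update} minimizes exactly the $a_u$-dependent part over $\mathcal{A}_u^{\rm adm}$, so it never increases $\Phi$.

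Under the current-point-preserving tie rule, a configuration change occurs only when some candidate is strictly better than the incumbent. Hence every strict configuration change strictly decreases $\Phi$. Since $\Phi$ takes finitely many values, it cannot strictly decrease indefinitely, and no configuration can be revisited after a strict decrease. The number of strict changes is therefore bounded by $\prod_u|\mathcal{A}_u^{\rm adm}|-1$.

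To finish, consider a full cyclic sweep over $\mathcal{U}$ that produces no change; such a sweep must eventually occur once the strict changes are exhausted. In that sweep every packet's incumbent action already attains the minimum in \eqref{eq:block_update}, given the other blocks as they stand. Because nothing changes during the sweep, the other blocks are identical to the final configuration. So the final point minimizes $\Phi$ with respect to each individual block, which is the claimed blockwise-optimal point.

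The main subtle step is the tie rule. Without it, the update could cycle among tied minimizers indefinitely without any strict decrease, and termination would fail. I must also check that $\alpha_u$ is recomputed from the current configuration before each update, so that the affine representation is exact at the moment of update. Finally, I should note that the claim concerns only blockwise (coordinatewise) optimality, not global optimality of $\mathbf{P0}$.
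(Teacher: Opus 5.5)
Your proposal is correct and follows the same descent argument as the paper: the block-affine form \eqref{eq:block_affine_form} means each update minimizes the whole $a_u$-dependent part of $\Phi$, the tie rule turns every accepted change into a strict decrease, and the finite configuration product bounds the number of such changes. Your explicit bound $\prod_u|\mathcal{A}_u^{\rm adm}|-1$ and the no-change full-sweep step only spell out details that the paper's proof leaves implicit.
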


\begin{IEEEproof}
By \eqref{eq:block_affine_form}, update \eqref{eq:block_update} minimizes the complete packet-dependent part of $\Phi$. Every accepted configuration change therefore strictly decreases $\Phi$, and the Cartesian product of the finite candidate sets contains only finitely many configurations. When no further strict change is available, no packet admits a strictly improving unilateral update.
\end{IEEEproof}

Proposition~\ref{prop:fixed_termination} characterizes the packet-update operator for fixed $\lambda$ and $\mu$. Algorithm~\ref{alg:block} interleaves one cyclic packet sweep with a dual-price update and is therefore used as a finite-action primal--dual search for $\mathbf{P0}$. Its solution quality is assessed against exact enumeration on tractable instances in Section~\ref{subsec:results_setup}. Likewise, failure of the heuristic search to encounter a feasible configuration is not a certificate that $\mathbf{P0}$ itself is infeasible; only an empty admissible packet candidate set provides the immediate infeasibility certificate in Step~1. Let $U=|\mathcal{U}|$ and $A=\max_u|\mathcal{A}_u|$. One sweep costs $O(UAK_{\max}+UN_f)$ when link metrics are evaluated online and $O(UA+UN_f)$ when reliability, delay, and energy are precomputed for each candidate.

\section{Numerical and Trace-Driven Results}
\label{sec:results}

\subsection{Experimental Setup, Baselines, and Solver Validation}
\label{subsec:results_setup}

The synthetic experiments vary distance, expected frame-delay and frame-energy budgets, shadowing strength, and conservative CSI margin while using the same semantic packetization and the same rate/MCS--power--HARQ action menu across policies. The analytical link follows the Rayleigh and Chase-combining model in Section~\ref{subsec:wireless_harq}. The GQA experiments map validation scene graphs and grounded semantic programs to triplet packets, vocabulary increments, and question-specific prerequisite sets using the public GQA annotations \cite{ref48}. For each evaluated image, five non-paraphrase grounded questions are selected, and $q_i$ denotes the number of those questions whose grounded programs require triplet $i$; the primary task weight is $w_i=1+q_i/5$.

The controlled link-adaptation experiment used in Fig.~\ref{fig:robust}(c) fixes a 112-bit vocabulary packet, the analytical 500-m Rayleigh Chase-combining HARQ (HARQ-CC) link, the full $4\times4\times4=64$ action menu, and the dual prices $\lambda=8.3488\times10^{-3}$ and $\mu=1.4427\times10^{-2}$ while varying only the dependency coefficient $\alpha_q$. Thus, any change in the selected action in Fig.~\ref{fig:robust}(c) is attributable to a change in the reliability value assigned to the prerequisite packet under fixed wireless conditions and fixed resource prices.

\begin{table}[!t]
\caption{Wireless, Candidate-Menu, and Solver Parameters}
\label{tab:params}
\centering
\scriptsize
\setlength{\tabcolsep}{2.2pt}
\begin{tabular}{p{0.37\linewidth}p{0.57\linewidth}}
\toprule
Parameter & Value\\
\midrule
Carrier / bandwidth & $3.5$ GHz / $180$ kHz\\
Noise power spectral density / noise figure & $-174$ dBm/Hz / $7$ dB\\
Path-loss exponent / shadowing std. & $3.15$ / $5$ dB\\
Implementation loss $L_{\rm imp}$ & $2$ dB\\
Power menu & $\{-5,5,15,23\}$ dBm\\
MCS efficiencies & $\{0.50,0.80,1.20,1.80\}$ bpcu\\
HARQ depth & $K_{\max}=4$, $K_{\rm UEP}=2$\\
Feedback / circuit / feedback power & $0.4$ ms / $28$ mW / $4$ mW\\
Expected delay / energy budgets & $38$ ms / $5.2$ mJ per frame\\
Vocabulary floor & $\delta_v=1$ (inactive)\\
Packet headers & vocab./triplet: $48/64$ bits\\
Token/auxiliary fields & $b_{\rm tok}(x)=12+8\min\{|x|,12\}$; $b_{\rm crc}^{\rm tok}=0$, $b_{\rm aux}=10$ bits\\
Packet grouping & at most 6 new tokens / 6 triplets per packet\\
Multiplier schedule & $s_r=0.75/\sqrt{r}$\\
Iteration cap / initializations & $500$ per start / three starts\\
Feasible polishing & Six one-block passes\\
\bottomrule
\end{tabular}
\end{table}

\begin{table}[!t]
\caption{GQA Trace-Driven Evaluation Protocol}
\label{tab:gqa_protocol}
\centering
\scriptsize
\setlength{\tabcolsep}{3pt}
\renewcommand{\arraystretch}{1.08}
\begin{tabular}{p{0.34\linewidth}p{0.58\linewidth}}
\toprule
\textbf{Item} & \textbf{Setting}\\
\midrule
Source annotations &
10,696 scene graphs; 132,062 balanced questions\\
Eligible grounded questions &
104,128 (78.85\%)\\
Session protocol &
$20\times(5$ warm-up $+50$ evaluated images$)$\\
Evaluated workload &
1,000 images; 5,000 questions\\
Question composition &
2,002 rel.; 1,693 attr.; 731 obj.; 574 cat.\\
Maximum triplets per frame &
48\\
Primary task weight &
$w_i=1+q_i/5$\\
Primary / cross-check link &
BLER ledger / analytical Rayleigh HARQ-CC\\
Statistical replicate &
20 independent sessions\\
Uncertainty summary &
Two-sided Student-$t$ 95\% confidence interval of the session mean\\
\bottomrule
\end{tabular}
\end{table}

All policies use identical semantic frames, packetization, payloads, wireless link ledgers, rate/MCS labels, power levels, delay and energy models, and frame budgets. For the fixed packet-weight baselines, define
\begin{equation}
J_{\rm pkt}(\mathbf a)=\sum_{\ell\in\mathcal U_t}\omega_\ell\epsilon_\ell(a_\ell)+\sum_{q\in\mathcal U_0}\omega_q\epsilon_q(a_q),\qquad
\omega_\ell=\sum_{i:\ell(i)=\ell}w_i,
\label{eq:baseline_packet_objective}
\end{equation}
where every vocabulary weight $\omega_q$ is fixed before link optimization and is independent of the realized packet reliabilities. To isolate graph awareness from reliability-state adaptation, Static-DA UEP uses
\begin{equation}
\omega_{\ell}^{\rm SDA}=\sum_{i:\ell(i)=\ell}w_i,\qquad
\omega_q^{\rm SDA}=\sum_{i:q\in\mathcal Q_i}w_i.
\label{eq:static_da_weights}
\end{equation}
These coefficients equal the proposed marginal values at the all-success reference state, $\alpha_u(\boldsymbol\epsilon_{-u}=\mathbf 0)$, and are then frozen before wireless optimization. Hence Static-DA uses the same prerequisite graph, task weights, and full wireless action menu as the proposed method but removes reliability-state adaptation. Dependency-agnostic HARQ uses the same indexed-packet aggregate task weights but assigns fixed vocabulary weights without using downstream prerequisite sets. Semantic-priority HARQ uses a fixed semantic ordering and a fixed protocol-overhead priority for vocabulary packets. Priority UEP uses the same fixed priority rule but restricts $K_u\le K_{\rm UEP}$; UEP-only fixes $K_u=1$ while retaining unequal rate and power choices; Uniform HARQ exhaustively selects one common $(m,P,K)$ tuple for all packets. These definitions are summarized again in Appendix~\ref{app:baseline_details}.

For the GQA statistics, frames within a session are averaged first because the synchronized vocabulary evolves across frames, and the 20 independent session means are then used to form two-sided Student-$t$ 95\% confidence intervals. To test sensitivity to the analytical outage model, we also use a separate non-analytic packet-level BLER ledger, consistent with lookup-based physical-layer (PHY) abstractions used in system-level studies \cite{ref51,ref52}. It contains 1,152 anchor entries over $B\in\{112,360,760\}$ bits, $\rho\in\{0.50,0.80,1.20,1.80\}$ bpcu, $P\in\{-5,5,15,23\}$ dBm, $d\in\{80,160,240,320,410,500\}$ m, and $K\in\{1,2,3,4\}$. The entries are treated as an external reliability map rather than as a calibrated implementation of a particular PHY.

\begin{figure*}[!t]
\centering
\includegraphics[width=0.95\textwidth]{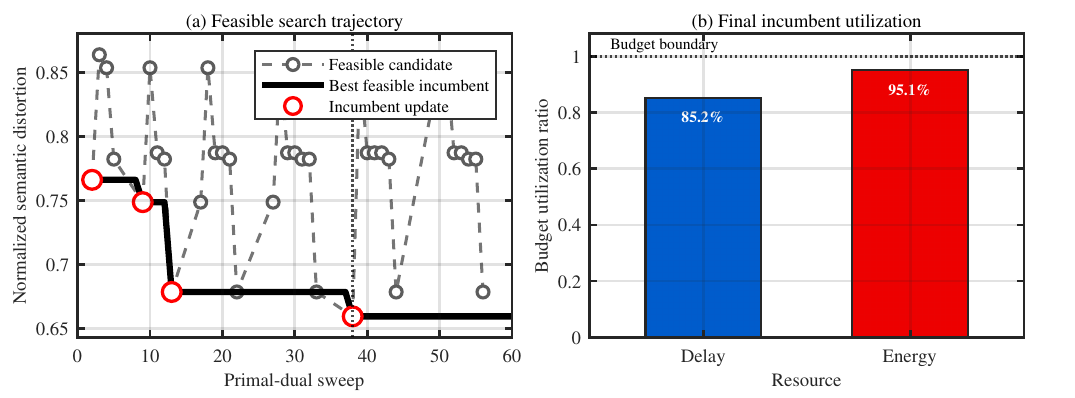}
\caption{Solver diagnostics for the retained analytical run. (a) Feasible candidate distortion across primal--dual sweeps and the best feasible incumbent; markers denote incumbent updates. (b) Final incumbent delay and energy utilization relative to the corresponding budgets.}
\label{fig:convergence}
\end{figure*}

Fig.~\ref{fig:convergence}(a) shows that the feasible incumbent improves in discrete steps as the finite action assignment changes; the last displayed incumbent update occurs at sweep 38. Fig.~\ref{fig:convergence}(b) shows that the retained incumbent uses 85.2\% of the delay budget and 95.1\% of the energy budget. These diagnostics support feasibility of the reported incumbent but do not constitute a global-convergence claim for the complete primal--dual procedure.

\begin{table}[!t]
\caption{Small-Scale Exhaustive Global-Optimum Audit}
\label{tab:global_audit}
\centering
\scriptsize
\setlength{\tabcolsep}{2.4pt}
\begin{tabular}{rcccc}
\toprule
$U$ & Instances & Comb./inst. & Global hits & Mean / worst $g_{\rm rel}$\\
\midrule
4 & 10 & 4,096 & 10/10 & 0 / 0\\
5 & 10 & 32,768 & 10/10 & 0 / 0\\
6 & 10 & 262,144 & 8/10 & 0.157\% / 1.095\%\\
\midrule
All & 30 & -- & 28/30 & 0.052\% / 1.095\%\\
\bottomrule
\end{tabular}
\end{table}

To assess solution quality against the exact optimum, Table~\ref{tab:global_audit} uses the analytical 500-m link, the same 38-ms and 5.2-mJ budgets, and a reduced eight-action menu with $\rho\in\{0.5,1.2\}$ bpcu, $P\in\{15,23\}$ dBm, and $K\in\{1,2\}$. Complete enumeration and the proposed allocation agree in 28 of 30 instances. Across all 30 cases, the mean relative gap is $0.052\%$ and the worst observed gap is $1.095\%$. These exact comparisons quantify the empirical solution quality of the finite-action allocation on instances for which exhaustive optimization is tractable.

\subsection{Main Wireless Performance and Resource Tradeoffs}
\label{subsec:wireless_tradeoffs}

\begin{figure*}[!t]
\centering
\includegraphics[width=0.95\textwidth]{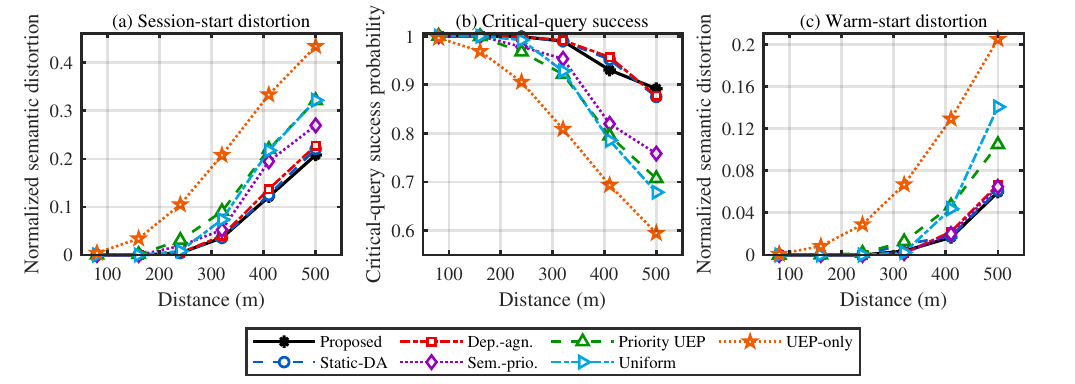}
\caption{Distance-domain analytical-link evidence. (a) Session-start normalized semantic distortion. (b) Critical-query success probability. (c) Warm-start normalized semantic distortion. The panels use the same policy set shown in (a).}
\label{fig:main}
\end{figure*}

Fig.~\ref{fig:main} isolates how prerequisite activity interacts with distance-dependent link degradation. At short distances, the full-menu policies cluster near low distortion and high query success because most packets are already highly reliable. As distance increases, session-start distortion rises and critical-query success falls, while the proposed policy remains on the low-distortion/high-success envelope of the displayed policies. Static-DA stays close to the proposed curve over part of the sweep, whereas fixed-priority and restricted-HARQ baselines separate more strongly as the link degrades. In the warm-start case of Fig.~\ref{fig:main}(c), the separation among the full-menu policies is substantially reduced because fewer triplets depend on newly transmitted vocabulary mappings. The contrast between session start and warm start therefore supports the interpretation that the benefit is tied to active prerequisites rather than to an expanded wireless action menu.

\begin{figure*}[!t]
\centering
\includegraphics[width=0.95\textwidth]{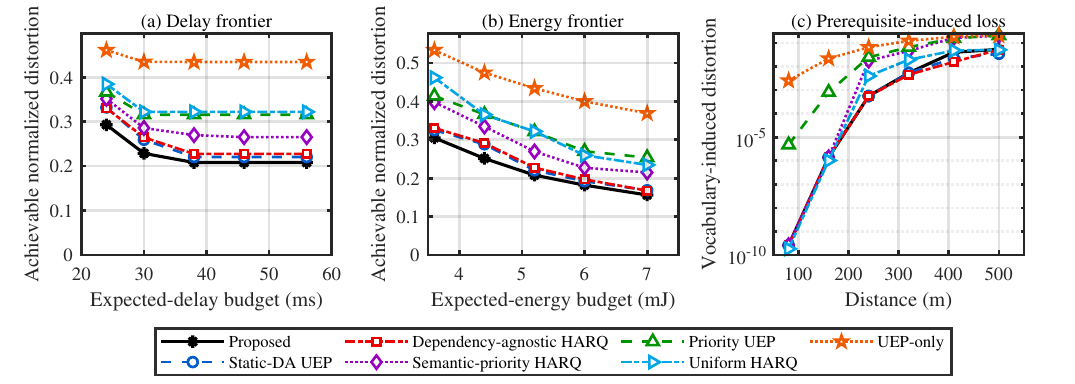}
\caption{Resource and prerequisite behavior under the analytical link. (a) Achievable normalized distortion versus expected frame-delay budget. (b) Achievable normalized distortion versus expected frame-energy budget. (c) Vocabulary-induced distortion versus distance on a logarithmic scale.}
\label{fig:resource}
\end{figure*}

Figs.~\ref{fig:resource}(a) and \ref{fig:resource}(b) show the reliability--resource tradeoff generated by finite rate/MCS, power, and HARQ-depth choices. The piecewise-flat behavior follows from the discrete action menu: additional delay or energy changes the distortion only when a different finite action assignment becomes preferable or feasible. Across the displayed sweeps, the proposed method tracks the lowest-distortion envelope, while the gaps to Static-DA and the fixed-weight baselines depend on the available resource budget. Fig.~\ref{fig:resource}(c) isolates the prerequisite-induced component and shows that vocabulary-induced loss becomes increasingly important as distance grows. This is the component directly affected by the state-dependent vocabulary coefficient in \eqref{eq:alpha_vocab}.

\subsection{Dependency Mechanism and GQA Trace Evidence}
\label{subsec:dependency_gqa}

Static-DA is deliberately stronger than payload-only or protocol-priority baselines because it is given the exact prerequisite graph and the same task weights as the proposed method. The only removed ingredient is the dependence of the packet coefficient on the current reliabilities of neighboring packets. Table~\ref{tab:static_da_primary} therefore provides a targeted test of the structural difference established in Proposition~\ref{prop:nonseparable}. The audit uses a separately retained deterministic stress ensemble, so its values are interpreted only within that stated ensemble and are not used as point values for the distance curves in Figs.~\ref{fig:main} or \ref{fig:calibrated}.

\begin{table}[!t]
\caption{Strong Static-DA Audit at 500 m (Independent Stress Ensemble)}
\label{tab:static_da_primary}
\centering
\scriptsize
\setlength{\tabcolsep}{3.0pt}
\begin{tabular}{lccc}
\toprule
Reliability map & Proposed & Static-DA & Dep.-agn.\\
\midrule
Analytical Rayleigh & $1.565{\times}10^{-4}$ & $1.565{\times}10^{-4}$ & $1.581{\times}10^{-4}$\\
Table-driven BLER & $0.1983$ & $0.2214$ & $0.2448$\\
\bottomrule
\end{tabular}
\vspace{1pt}
\begin{minipage}{0.96\linewidth}\scriptsize
Entries are mean normalized semantic distortion. The analytical and BLER rows use 20 independently seeded frames; all three policies use the same 64-action menu, 38-ms expected-delay budget, and 5.2-mJ energy budget.
\end{minipage}
\end{table}

Under the analytical Rayleigh map in Table~\ref{tab:static_da_primary}, Proposed and Static-DA have the same reported mean distortion, while dependency-agnostic HARQ is only slightly higher. Under the table-driven BLER map, Proposed reduces the reported mean distortion by 10.43\% relative to Static-DA and by 19.00\% relative to dependency-agnostic HARQ. The contrast is consistent with the structural claim: graph awareness alone can be sufficient in a highly reliable operating regime, whereas a heterogeneous reliability map can make the state dependence of the marginal packet value consequential. The table does not support a claim of pointwise dominance over Static-DA at every operating point.

\begin{figure}[!t]
\centering
\includegraphics[width=\linewidth]{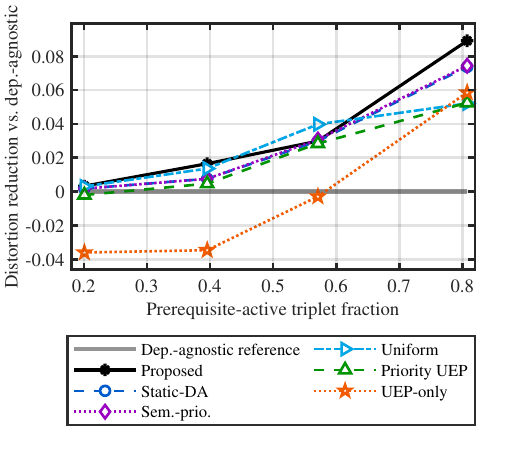}
\caption{GQA dependency-activity cross-check. Distortion reduction relative to dependency-agnostic HARQ is plotted versus the prerequisite-active triplet fraction; positive values indicate lower distortion than the dependency-agnostic reference.}
\label{fig:gqa_dependency}
\end{figure}

Fig.~\ref{fig:gqa_dependency} provides a trace-driven mechanism check that does not rely on a single 500-m aggregate. The proposed curve exhibits a larger positive distortion reduction as the prerequisite-active triplet fraction increases, whereas the gains are small in the lowest-activity region. This trend is the empirical signature expected from \eqref{eq:alpha_vocab}: when few facts depend on unsynchronized vocabulary, the additional state dependence has little room to change protection; when prerequisite activity is high, the reliability value of vocabulary packets can change materially with neighboring packet reliabilities.

\begin{table*}[!t]
\caption{GQA Results at 500 m Using the BLER Link Abstraction}
\label{tab:gqa_main}
\centering
\scriptsize
\setlength{\tabcolsep}{2.6pt}
\begin{tabular}{lcccccc}
\toprule
Policy & Distortion & Query delivery & Vocab. fail & Delay (ms) & Energy (mJ) & Feas. fraction\\
\midrule
Proposed & 0.00943 $\pm$ 0.00176 & 0.99147 $\pm$ 0.00346 & 0.00563 & 26.44 & 5.080 & 1.00\\
Dep.-agn. HARQ & 0.02284 $\pm$ 0.00546 & 0.98015 $\pm$ 0.00456 & 0.01420 & 26.54 & 5.066 & 1.00\\
Semantic-priority & 0.02927 $\pm$ 0.00645 & 0.96434 $\pm$ 0.00702 & 0.02881 & 26.28 & 5.080 & 1.00\\
Priority UEP & 0.03182 $\pm$ 0.00634 & 0.96335 $\pm$ 0.00716 & 0.02776 & 26.03 & 5.078 & 1.00\\
Uniform HARQ & 0.04134 $\pm$ 0.00870 & 0.95280 $\pm$ 0.00923 & 0.01228 & 23.71 & 4.614 & 1.00\\
UEP-only & 0.07384 $\pm$ 0.00493 & 0.93663 $\pm$ 0.00616 & 0.03544 & 25.54 & 5.061 & 1.00\\
\bottomrule
\end{tabular}
\end{table*}

The session-level GQA statistics in Table~\ref{tab:gqa_main} provide the primary quantitative trace-driven comparison. Relative to dependency-agnostic HARQ, the proposed allocation reduces the reported mean distortion from $0.02284$ to $0.00943$, corresponding to a 58.71\% reduction. Query failure decreases from $1-0.98015=0.01985$ to $1-0.99147=0.00853$, corresponding to a 57.03\% relative reduction. The proposed policy also reports a lower vocabulary-failure component, $0.00563$ versus $0.01420$, while both policies remain feasible on every evaluated frame. The GQA experiment therefore evaluates communication-layer delivery of the grounded semantic inputs; downstream visual question answering (VQA) inference is outside the evaluation pipeline.

\subsection{Wireless Uncertainty, Action Switching, and Link-Abstraction Cross-Checks}
\label{subsec:robustness_link}

\begin{figure*}[!t]
\centering
\includegraphics[width=\textwidth]{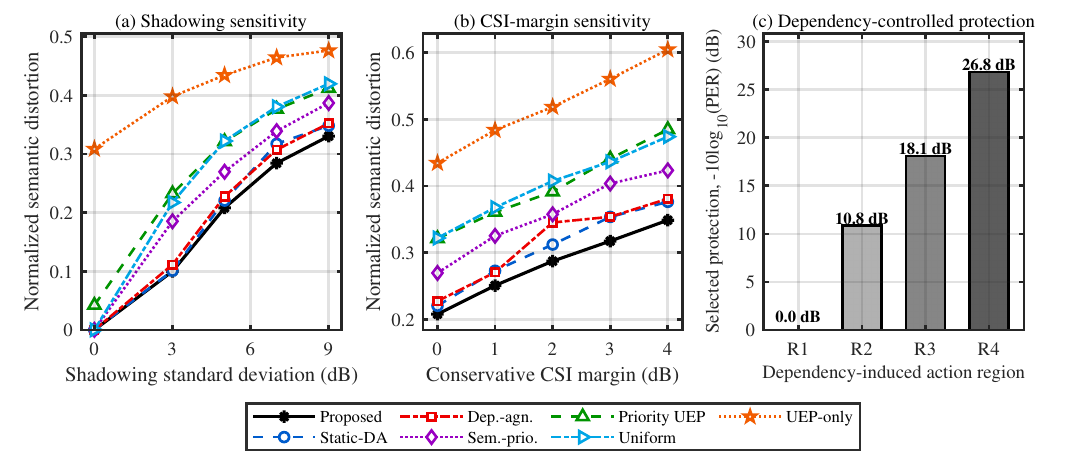}
\caption{Wireless uncertainty and dependency-controlled link adaptation. (a) Normalized semantic distortion versus shadowing standard deviation at 500 m. (b) Normalized semantic distortion versus conservative CSI margin. (c) (c) Selected packet protection based on the packet error rate (PER),
expressed as $-10\log_{10}(\mathrm{PER})$, where $\mathrm{PER}$ is the packet error rate, across the four observed dependency-induced action regions for a 112-bit vocabulary packet; the bar annotations give the selected $(\rho,P_{\rm dBm},K)$ tuple.}
\label{fig:robust}
\end{figure*}

Figs.~\ref{fig:robust}(a) and \ref{fig:robust}(b) show the expected degradation as the link is made more uncertain through stronger shadowing or a larger conservative CSI margin. The finite action menu creates plateaus and crossings among the baselines, so the evidence is interpreted as a sensitivity test rather than a claim of uniform pointwise dominance. Across the displayed range, the proposed curve remains on the low-distortion envelope.

Fig.~\ref{fig:robust}(c) directly visualizes the mechanism in \eqref{eq:switching_threshold}. With payload, channel state, action menu, and dual prices fixed, varying only $\alpha_q$ yields four observed action regions. The selected protection levels are 0.0, 10.8, 18.1, and 26.8 dB, with the annotated tuples changing from $(1.8,-5,1)$ to $(1.2,23,4)$, $(0.8,23,4)$, and $(0.5,23,4)$. Thus, a change in prerequisite-induced reliability value alone is sufficient in the controlled experiment to change the selected rate/MCS, transmit power, and/or HARQ depth. The figure demonstrates finite-action switching over the tested range; it does not assert that every possible $\alpha_q$ interval or every candidate action appears.

\begin{figure*}[!t]
\centering
\includegraphics[width=\textwidth]{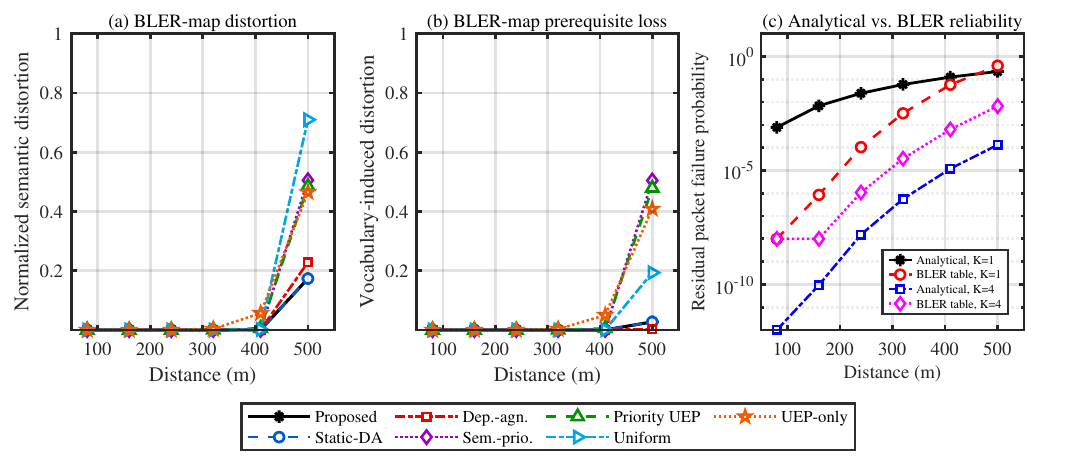}
\caption{Table-driven BLER link-abstraction cross-check. (a) Normalized semantic distortion versus distance. (b) Vocabulary-induced distortion versus distance. (c) Residual packet-failure probability versus distance for analytical Rayleigh HARQ-CC and the external BLER ledger at $K=1$ and $K=4$.}
\label{fig:calibrated}
\end{figure*}

Figs.~\ref{fig:calibrated}(a) and \ref{fig:calibrated}(b) repeat the distance-domain comparison after replacing the analytical residual-failure map by the non-analytic packet-level BLER ledger. The curves remain close to zero over the easier part of the distance range and separate sharply as the link approaches the difficult long-distance regime. The proposed method remains on the low-distortion envelope, and the vocabulary-induced component in Fig.~\ref{fig:calibrated}(b) shows that prerequisite loss contributes materially to the long-distance degradation. Fig.~\ref{fig:calibrated}(c) confirms that the analytical and table-driven reliability maps differ substantially for both $K=1$ and $K=4$ over the displayed distance range. The persistence of the dependency-aware trend across these materially different reliability maps supports robustness of the allocation mechanism to the choice of packet-level reliability model.

Across the analytical link, GQA traces, controlled switching experiment, and BLER-map cross-check, the results connect the observed performance differences to the same mechanism: active prerequisites alter the marginal reliability coefficient, and the resulting coefficient can change the selected MCS--power--HARQ action. The GQA results further show improved grounded-query delivery under the tested delay and energy budgets.

\section{Conclusion}
\label{sec:conclusion}

This paper studied reliability allocation for open-vocabulary scene-graph packets over a latency- and energy-constrained wireless visual uplink. Modeling vocabulary-increment packets as prerequisites of indexed triplets yields a multi-affine semantic distortion with mixed reliability terms, so packet value cannot generally be represented by reliability-independent static UEP weights. The resulting state-dependent dependency coefficient enters the packet-wise Lagrangian objective and determines when a different finite rate/MCS--power--HARQ action becomes preferable.

The proposed block method performs exact finite-candidate packet updates for fixed multipliers and has finite fixed-multiplier termination. In the reduced exhaustive audit, the implemented solver matches the global optimum in 28 of 30 cases, with a mean relative gap of $0.052\%$ and a worst observed gap of $1.095\%$. The independent Static-DA audit shows identical reported mean distortion for Proposed and Static-DA under the tested analytical Rayleigh setting, but lower distortion for Proposed under the table-driven BLER map, indicating that graph awareness and reliability-state adaptation are distinct ingredients whose practical separation depends on the reliability regime. The GQA session statistics further show a 58.71\% reduction in reported mean semantic distortion and a 57.03\% reduction in exact grounded-question failure relative to dependency-agnostic HARQ under the stated protocol. The dependency-activity, uncertainty, controlled switching, and link-abstraction figures connect these performance results back to the proposed mechanism: prerequisite activity changes packet reliability value, and that value can change the selected wireless action.

The current evaluation covers finite action menus, an analytical fading model, GQA traces, and a separate non-analytic BLER map. Extensions to query-level objectives, ACK-conditioned multi-frame control, correlated packet failures, and standard-calibrated or measured packet-reliability ledgers are natural directions for future work.

\appendices

\section{Experimental Packetization and Task-Weight Instantiation}
\label{app:repro_packetization}
The optimization accepts any positive task weights fixed before link optimization. For the deterministic synthetic experiments, the token-string payload is instantiated as
\begin{equation}
b_{\rm tok}(x)=12+8\min\{|x|,12\}\quad\text{bits}.
\label{eq:app_token_length_new}
\end{equation}
Let $n_r$ be the occurrence count of relation token $r$ in the current trace window and $N_{\rm rel}=\sum_r n_r$. The normalized rarity score is
\begin{equation}
\zeta(r)=\frac{\log((N_{\rm rel}+1)/(n_r+1))}
{\max_{r'}\log((N_{\rm rel}+1)/(n_{r'}+1))+10^{-12}},
\end{equation}
and the synthetic task weight is
\begin{equation}
w_i=c_i\bigl(1+\beta_c\mathbf 1\{r_i\in\mathcal R_{\rm crit}\}\bigr)
\bigl(1+\beta_r\zeta(r_i)\bigr)\bigl(1+\beta_s s_i^{\rm safe}\bigr),
\label{eq:app_weight_new}
\end{equation}
where $s_i^{\rm safe}\in\{0,1\}$. The GQA experiment instead uses the query-derived weight specified in Table~\ref{tab:gqa_protocol}; neither choice changes the optimization structure.

\begin{table}[!t]
\caption{Semantic Packetization and Synthetic Weight Parameters}
\label{tab:appendix_payload}
\centering
\scriptsize
\setlength{\tabcolsep}{2.8pt}
\begin{tabular}{p{0.37\linewidth}p{0.56\linewidth}}
\toprule
Quantity & Value or role\\
\midrule
$B_{\rm hdr}^{v}/B_{\rm hdr}^{t}$ & 48 bits / 64 bits\\
$b_{\rm crc}^{\rm tok}/b_{\rm aux}$ & 0 bits / 10 bits\\
Vocabulary grouping & At most six new tokens per packet\\
Triplet grouping & At most six triplets per packet\\
$\mathcal R_{\rm crit}$ & crossing, near, approaching, turning, behind, overlapping\\
$\beta_c,\beta_r,\beta_s$ & $1.6,0.5,0.4$\\
Confidence & Dataset confidence when available; otherwise $c_i=1$\\
\bottomrule
\end{tabular}
\end{table}

\section{Baseline Definitions and Fair-Comparison Settings}
\label{app:baseline_details}
All policies use identical semantic frames, packet payloads, link ledgers, rate/MCS and power menus, expected-delay and energy accounting, and frame budgets. When the optional vocabulary failure cap is enabled, it is applied to every compared policy. For a fixed-weight baseline, the common packet objective is \eqref{eq:baseline_packet_objective}.

Static-DA UEP uses \eqref{eq:static_da_weights} with the full wireless action menu. Equivalently, it freezes the proposed marginal coefficients at the all-success state before any link action is selected. Dependency-agnostic HARQ gives triplet packets their aggregate task weights and allocates the same total vocabulary-weight mass across vocabulary packets in proportion to payload size, without using downstream prerequisite sets. Semantic-priority HARQ uses the triplet aggregate weights and a fixed protocol-overhead weight for every vocabulary packet. Priority UEP uses the same fixed priority rule but restricts $K_u\le K_{\rm UEP}$; UEP-only fixes $K_u=1$ while retaining unequal rate and power; Uniform HARQ exhaustively chooses one common $(m,P,K)$ tuple for all packets.

\begin{table}[!t]
\caption{Policy Definitions Used in the Experiments}
\label{tab:appendix_baselines}
\centering
\scriptsize
\setlength{\tabcolsep}{2.2pt}
\begin{tabular}{p{0.24\linewidth}p{0.39\linewidth}p{0.27\linewidth}}
\toprule
Policy & Semantic coefficient & Wireless menu\\
\midrule
Proposed & $\alpha_u(\boldsymbol\epsilon_{-u})$ from prerequisite state & Full $m,P,K$\\
Static-DA UEP & $\alpha_u(\mathbf 0)$, frozen & Full $m,P,K$\\
Dep.-agnostic & Fixed triplet weights; payload-only vocab weights & Full $m,P,K$\\
Semantic-priority & Fixed semantic ordering; fixed vocab overhead & Full $m,P,K$\\
Priority UEP & Same fixed priority & $K\le K_{\rm UEP}$\\
UEP-only & Same fixed priority & $K=1$\\
Uniform HARQ & No packet differentiation & One common tuple\\
\bottomrule
\end{tabular}
\end{table}

The three full-menu packet-adaptive policies use the same multiplier starts, step schedule, iteration cap, and feasibility-polishing rule. Static-DA is therefore not given a smaller search space than Proposed. Restricted-depth UEP baselines intentionally use the smaller HARQ sets shown in Table~\ref{tab:appendix_baselines}.

\section{Table-Driven BLER Link-Abstraction Interface}
\label{app:bler_interface}
For each action $a_u=(m_u,P_u,K_u)$, the optimizer requires only $\{\epsilon_u(a_u),\bar T_u(a_u),\bar\tau_u(a_u),\bar E_u(a_u)\}$. The retained table contains 1,152 anchors over $B\in\{112,360,760\}$ bits, $\rho\in\{0.50,0.80,1.20,1.80\}$ bpcu, $P\in\{-5,5,15,23\}$ dBm, $d\in\{80,160,240,320,410,500\}$ m, and $K\in\{1,2,3,4\}$. For payloads between anchor sizes, the implementation linearly interpolates $\log\epsilon$ in payload bits and linearly interpolates $\bar T$; payloads outside the anchor interval use the nearest payload anchor. The remaining coordinates lie directly on the table grid. Expected delay and energy are recomputed for the actual packet size using \eqref{eq:expected_delay} and \eqref{eq:expected_energy}.

Link-level PHY abstractions commonly expose decoding reliability through channel- and MCS-dependent BLER mappings rather than embedding a complete decoder in the system-level optimizer \cite{ref51}. Here, the retained ledger supplies this packet-level interface together with the corresponding HARQ statistics. A standard-specific or measured ledger can be substituted by replacing these tabulated link statistics without modifying the prerequisite model or the allocation algorithm.

\end{document}